\newtheorem{thm}{Theorem}
\newtheorem{lem}{Lemma}
\newtheorem{proof}{proof}
\newtheorem{defn}{Definition}
\newtheorem{rem}{Remark}
\newtheorem{exam}{Example}
\begin{document}

\title{List-decodable Codes for Single-deletion Single-substitution
with List-size Two}

\author{\IEEEauthorblockN{Wentu~Song, Kui~Cai, and Tuan~Thanh~Nguyen}
\IEEEauthorblockA{Science, Mathematics and Technology Cluster\\
Singapore University of Technology and Design, Singapore 487372\\
Email: \{wentu\_song, cai\_kui, tuanthanh\_nguyen\}@sutd.edu.sg}}

\maketitle


\begin{abstract}
In this paper, we present an explicit construction of
list-decodable codes for single-deletion and single-substitution
with list size two and redundancy $3\log n+4$, where $n$ is the
block length of the code. Our construction has lower redundancy
than the best known explicit construction by Gabrys \emph{et al}.
$($arXiv 2021$)$, whose redundancy is $4\log n+O(1)$.
\end{abstract}

\section{Introduction}

Codes correcting insertion, deletion and substitution errors
(collectively referred to as edit errors) have gone through a long
history from the seminal work of Levenshtein \cite{Levenshtein65}.
It was shown in \cite{Levenshtein65} that the binary
Varshamov-Tenengolts (VT) code \cite{Varshamov65}, which is given
by \begin{align*} \mathscr C_{n}(a)=\left\{\bm x\in\{0,1\}^n:
\sum_{i=1}^nix_i\equiv a~(\text{mod}~n+1)\right\},\end{align*} can
correct a single edit error and is asymptotically optimal in
redundancy, given by $\log n+2$. Order-optimal non-binary
single-edit correcting codes were studied in
\cite{Cai19,Gabrys-21}.

Constructing optimal multiple-edit error correcting codes is much
more challenging, even for binary deletion codes. A generalization
of the VT construction for multiple-deletion correcting codes was
presented in \cite{Helberg02}, but this generalized construction
has asymptotic rate strictly smaller than $1$. Recently, there
were many works on explicit construction of low-redundancy
$t$-deletion correcting codes for $t\geq 2~($e.g., see
\cite{Brakensiek18}$-$\!\!\cite{Gur2020}$)$. For $t=2$, Guruswami
and H\aa stad constructed a family of $2$-deletion correcting
codes with length $n$ and redundancy $4\log n+O(\log\log n)$
\cite{Gur2020}, which matches the best known upper bound obtained
via the Gilbert-Varshamov-type greedy algorithm
\cite{Brakensiek18}. By introducing the higher order VT syndromes
and the syndrome compression technique, Sima \emph{et al}.
constructed a family of $t$-deletion correcting codes with
redundancy $8t\log n+o(\log n)$ \cite{Sima19-2-1}. Unfortunately,
for $t>2$, all existential constructions of $t$-deletion
correcting codes have redundancy greater than the
Gilbert-Varshamov-type bound.

The best known $t$-edit correcting codes for $t\geq 2$ were given
by Sima \emph{et al}., which have redundancy $4t\log n+o(\log n)$
\cite{Sima20}. The method in \cite{Sima20} was improved by the
authors in \cite{SNCH-21}, which gave a construction of
$t$-deletion $s$-substitution correcting codes with redundancy
$(4t+3s)\log n+o(\log n)$. A family of single-deletion
single-substitution correcting binary codes with redundancy $6\log
n+8$ was constructed in \cite{Smagloy20}. So far, constructing
optimal (with respect to redundancy) multiple-edit correcting
codes is still an open problem, even for single-deletion
single-substitution correcting codes.

As a relaxation of the decoding requirement, list-decoding for
insertions and deletions have been considered by several research
teams, mainly focusing on list-decoding for some fraction of
deletions/insertions \cite{Guruswami17}$-$\!\!\cite{Liu21}. Unlike
the traditional decoding (also referred to as unique-decoding),
list-decoding with list-size $\ell$ allows to give a set of $\ell$
codewords from each corrupted sequence. A family of explicit
list-decodable codes for two deletions with length $n$ and
list-size two was constructed in \cite{Gur2020}, which has
redundancy $3\log n$. Note that the redundancy of the construction
in \cite{Gur2020} is lower than the Gilbert-Varshamov-type bound,
which is $4\log n$. The improvement in redundancy is achieved by
the relaxation in the decoding requirement.

In this paper, we present an explicit construction of
list-decodable codes for single-deletion and single-substitution
with list-size two and redundancy $3\log n+4$. Our construction
improves the recent work by Gabrys \emph{et al}. \cite{Gabrys-21},
which constructed such codes with redundancy $4\log n+O(1)$.

The rest of this paper is organized as follows. In Section
\uppercase\expandafter{\romannumeral 2}, the basic concepts are
introduced and some preliminary properties of the errors are
discussed. Our construction of list-decodable codes for
single-deletion and single-substitution is presented in Section
\uppercase\expandafter{\romannumeral 3}. The auxiliary lemma used
by our construction is proved in Section
\uppercase\expandafter{\romannumeral 4}.

\section{Preliminaries}

For any positive integers $m$ and $n$ such that $m\leq n$, denote
$[m,n]=\{m,m+1,\ldots,n\}$. If $m>n$, let $[m,n]=\emptyset$. For
simplicity, we denote $[n]=[1,n]$ and $\mathbb Z_n=[0,n-1]$.

In this work, we consider binary codes. For any sequence (vector)
$\bm x$ of length $n$, we use $x_i$ to denote the $i$th symbol of
$\bm x$, and hence $\bm x$ can be denoted as $\bm
x=(x_1,x_2,\ldots, x_n)\in\{0,1\}^n$ or simply, $\bm
x=x_1x_2\ldots x_n$. The \emph{weight} of $\bm x$, denoted by
$\mathsf{wt}(\bm x)$, is the number of non-zero symbols (the
symbol $1$ for binary sequence) in $\bm x$. Clearly, for binary
sequence $\bm x$, we have $\mathsf{wt}(\bm x)=\sum_{i=1}^nx_i$.

Given non-negative integers $t$ and $s$ such that $t+s<n$, for any
$\bm{x}\in\{0,1\}^n$, the \emph{error ball} of $\bm x$ under
$t$-deletion $s$-substitution, denoted by $\mathscr
B_{t,s}(\bm{x})$, is the set of all sequences that can be obtained
from $\bm{x}$ by $t$ deletions $($i.e., deleting $t$ symbols of
$\bm x)$ and at most $s$ substitutions $($i.e., substituting at
most $s$ symbols of $\bm x$, each with a different symbol$)$. A
code $\mathscr C\subseteq\{0,1\}^n$ is \emph{list-decodable} for
$t$-deletion $s$-substitution with list size $\ell$ if any $\bm
y\in\{0,1\}^{n-1}$ is contained by the error ball of at most
$\ell$ codewords of $\mathscr C$. In other words, for any $\bm
y\in\{0,1\}^{n-1}$, there exist at most $\ell$ codewords of
$\mathscr C$ from which $\bm y$ can be obtained by $t$ deletions
and at most $s$ substitutions.

In this work, we consider list-decodable codes for single deletion
and single substitution, i.e., $t=s=1$. Suppose $\bm
x\in\{0,1\}^n$ and $\bm y\in\{0,1\}^{n-1}$ such that $\bm y$ can
be obtained from $\bm x$ by deleting one symbol of $\bm x$ and
substituting at most one symbol of $\bm x$ with a different symbol
in $\{0,1\}$. We can compute the difference between the weights of
$\bm x$ and $\bm y$ for all possible cases $($see Table 1$)$.
According to Table 1, we have $\mathsf{wt}(\bm x)-\mathsf{wt}(\bm
y)\in\{-1,0,1,2\}$. If $\mathsf{wt}(\bm x)-\mathsf{wt}(\bm
y)\in\{-1,2\}$, then the values of the deleted and substituted
symbols can be determined. If $\mathsf{wt}(\bm x)-\mathsf{wt}(\bm
y)=0$, then $\bm y$ can be obtained from $\bm x$ by deleting a
$0$, or by deleting a $1$ and substituting a $0$ with a $1$. For
the case that $\bm y$ is obtained from $\bm x$ by deleting a $0$,
unless $\bm x$ is the all-zero sequence, $\bm y$ can also be
obtained from $\bm x$ by deleting a $1$ and substituting a $0$
with a $1$.\footnote{For example, let $\bm x=0110001$ and let $\bm
y=011001$ be obtained from $\bm x$ by deleting $x_5=0$. Then $\bm
y$ can also be obtained from $\bm x$ by deleting $x_3=1$ and
substituting $x_4=0$ with $\bar{x}_4=1$. In general, if $\bm x$ is
not the all-zero sequence and $\bm y$ can be obtained from $\bm x$
by deleting a $0$, then we can always find a $0~($denoted by
$\hat{0})$ in the same run with the deleted $0$ that is adjacent
to a $1~($denoted by $\hat{1})$. Then $\bm y$ can always be viewed
as being obtained from $\bm x$ by deleting the $\hat{1}$ and
substituting the $\hat{0}$ with $1$.} Hence, if $\mathsf{wt}(\bm
x)-\mathsf{wt}(\bm y)=0$, then $\bm y$ can always be obtained from
$\bm x$ by deleting a $1$ and substituting a $0$ with a $1$.
Similarly, if $\mathsf{wt}(\bm x)-\mathsf{wt}(\bm y)=1$ and $\bm
x$ is not the all-one sequence, then $\bm y$ can always be
obtained from $\bm x$ by deleting a $0$ and substituting a $1$
with a $0$. In summary, we have the following remark.
\begin{rem}\label{rem-dw-xy}
Suppose $\bm x\in\{0,1\}^n\backslash\{1^n,0^n\}$, where $1^n$ and
$0^n$ are the all-one sequence and the all-zero sequence of length
$n$ respectively, and $\bm y\in\{0,1\}^{n-1}$ such that $\bm y$
can be obtained from $\bm x$ by deleting one symbol of $\bm x$ and
substituting \emph{at most} one symbol of $\bm x$. Then $\bm y$
can be obtained from $\bm x$ by deleting one symbol and
substituting \emph{exactly one} symbol of $\bm x$, and the values
of the deleted and substituted symbols can be determined by
$\mathsf{wt}(\bm x)\!~(\text{mod}~4)$ and $\mathsf{wt}(\bm y)$.
\end{rem}
\begin{table}
\begin{center}\renewcommand\arraystretch{1.3}
\begin{tabular}{|p{3.1cm}|p{2.0cm}|}
\hline Cases of error combination &
~$\mathsf{wt}(\bm x)-\mathsf{wt}(\bm y)$ \\
\hline ~~$1\rightarrow \epsilon$,~ no substitution & $~~~~~~~~~~1$ \\
\hline ~~$1\rightarrow \epsilon,~ 1\rightarrow 0$  & $~~~~~~~~~~2$ \\
\hline ~~$1\rightarrow \epsilon,~ 0\rightarrow 1$  & $~~~~~~~~~~0$ \\
\hline ~~$0\rightarrow \epsilon$,~ no substitution & $~~~~~~~~~~0$ \\
\hline ~~$0\rightarrow \epsilon,~ 1\rightarrow 0$  & $~~~~~~~~~~1$ \\
\hline ~~$0\rightarrow \epsilon,~ 0\rightarrow 1$  & $~~~~~~~-\!1$ \\
\hline
\end{tabular}\\
\end{center}
\vspace{1.5mm}Table 1. The value of $\mathsf{wt}(\bm
x)-\mathsf{wt}(\bm y)$ for different cases of single-deletion
single-substitution, where $a\rightarrow \epsilon$ means a symbol
$a\in\{0,1\}$ is deleted from $\bm x$ and for
$b\in\{0,1\}\backslash\{a\}$, $a\rightarrow b$ means a symbol $a$
of $\bm x$ is substituted by the symbol $b$.
\end{table}

\section{Main Results}

In this section, we present our construction of list-decodable
codes for single-deletion and single-substitution. Our
construction only uses the weight and the first two order VT
syndromes for binary sequences.

We adopt the method of \cite{Sima19-2-1} to define the higher
order VT syndromes. For each positive integer $j$ and each
$\bm{x}\in\{0,1\}^n$, the $j$th-order VT syndrome of $\bm{x}$ is
defined as
\begin{align}\label{def-f}
f_j(\bm{x})=\sum_{i=1}^n\left(\sum_{\ell=1}^i
\ell^{j-1}\right)x_i.
\end{align}
As in \cite{SNCH-20}, we can rearrange the terms and obtain
\begin{align}\label{re-def-f}
f_j(\bm{x})=\sum_{\ell=1}^n\left(\sum_{i=1}^\ell
i^{j-1}\right)x_\ell
=\sum_{i=1}^n\left(\sum_{\ell=i}^nx_\ell\right)i^{j-1}.
\end{align}

The code is given by the following definition, where $1^n$ and
$0^n$ denote the all-one sequence and the all-zero sequence of
length $n$, respectively.

\begin{defn}\label{code-Con}
For any fixed values $c_0\in\mathbb Z_4$, $c_1\in\mathbb Z_{2n}$
and $c_2\in\mathbb Z_{2n^2}$, let $\mathscr C_{n}(c_0,c_1,c_2)$ be
the set of all sequences $\bm x\in\{0,1\}^n\backslash\{1^n, 0^n\}$
satisfying the following three conditions:
\begin{enumerate}
 \item[$(\text{C}0)$] $\mathsf{wt}(\bm x)\equiv c_0~(\text{mod}~4)$.
 \item[$(\text{C}1)$] $f_1(\bm x)\equiv c_1~(\text{mod}~2n)$.
 \item[$(\text{C}2)$] $f_2(\bm x)\equiv c_2~(\text{mod}~2n^2)$.
\end{enumerate}
\end{defn}

Then our main result can be stated as the following theorem.
\begin{thm}\label{main-thm}
There exists a $(c_0,c_1,c_2)\in\mathbb Z_4\times\mathbb
Z_{2n}\times\mathbb Z_{2n^2}$ such that the code $\mathscr
C_{n}(c_0,c_1,c_2)$ in Definition \ref{code-Con} has redundancy at
most $3\log n+4$ and is list-decodable from single-deletion and
single-substitution with list size $2$.
\end{thm}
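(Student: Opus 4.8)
The plan is to prove the two assertions of Theorem~\ref{main-thm}---the redundancy bound and the list-decodability---separately, handling the redundancy by a counting argument and reserving the real work for the list-size analysis.

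For the redundancy, I would argue by pigeonhole. The three syndrome conditions $(\text{C}0)$--$(\text{C}2)$ partition the admissible set $\{0,1\}^n\backslash\{1^n,0^n\}$ into at most $|\mathbb Z_4|\cdot|\mathbb Z_{2n}|\cdot|\mathbb Z_{2n^2}|=4\cdot 2n\cdot 2n^2=16n^3=2^{3\log n+4}$ classes, one for each triple $(c_0,c_1,c_2)$. Since these classes cover all $2^n-2$ admissible sequences, at least one triple yields a class of size at least $(2^n-2)/2^{3\log n+4}$. For that choice the redundancy $n-\log_2|\mathscr C_n(c_0,c_1,c_2)|$ is at most $3\log n+4$ (up to the negligible effect of excluding the two constant sequences), and this pins down the target triple $(c_0,c_1,c_2)$ to be used below.

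For list-decodability I would fix this triple and suppose, toward a contradiction, that some $\bm y\in\{0,1\}^{n-1}$ lies in $\mathscr B_{1,1}(\bm x^{(k)})$ for three \emph{distinct} codewords $\bm x^{(1)},\bm x^{(2)},\bm x^{(3)}$. Because each codeword lies in $\{0,1\}^n\backslash\{1^n,0^n\}$ and all share the same weight residue $c_0\ (\text{mod}~4)$ by $(\text{C}0)$, Remark~\ref{rem-dw-xy} applies uniformly: each codeword produces $\bm y$ by deleting one symbol and substituting \emph{exactly} one symbol, and the value of the deleted symbol and the value of the substituted symbol are the \emph{same} for all three, being determined by $c_0$ and $\mathsf{wt}(\bm y)$. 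Thus all three errors are of a single common type, reducing the situation to one fixed error pattern. I would then describe each candidate codeword as obtained from $\bm y$ by reversing this pattern: inserting the deleted symbol at some position $p$ and flipping the substituted symbol back at some position $q$. Using the rearranged formula \eqref{re-def-f}, I would write $f_1(\bm x)$ and $f_2(\bm x)$ as explicit functions of $\bm y$, $p$, and $q$, so that conditions $(\text{C}1)$ mod $2n$ and $(\text{C}2)$ mod $2n^2$ become two modular equations in $(p,q)$. The heart of the argument is to show that these two equations, together with the run structure of $\bm y$, admit at most two solutions; equivalently, the map sending a type-consistent preimage of $\bm y$ to $\bigl(f_1(\bm x)\ \text{mod}\ 2n,\ f_2(\bm x)\ \text{mod}\ 2n^2\bigr)$ is at most two-to-one. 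This contradicts the existence of three distinct codewords and gives list size two.

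The main obstacle is exactly this last step: bounding the number of type-consistent preimages with matching first- and second-order syndromes. The classical VT syndrome $f_1$ alone localizes a pure deletion but cannot separately resolve the coupled insertion position $p$ and flip position $q$, so the second-order syndrome $f_2$ must supply the extra constraint, and the analysis splits into cases according to the relative order of $p$ and $q$ and to whether the flipped symbol sits inside or at the boundary of a run of $\bm y$. The moduli $2n$ and $2n^2$ are chosen precisely so that each relevant syndrome difference stays within a single period, making the two-to-one bound tight and reflecting an intrinsic two-fold ambiguity that cannot be removed---analogous to the list-size-two two-deletion codes of Guruswami and H\aa stad. I would isolate this case analysis as the auxiliary lemma and carry it out separately, since the remaining reductions above are routine.
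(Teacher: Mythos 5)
Your redundancy argument and your initial reduction (pigeonhole over the $16n^3$ syndrome classes, then Remark~\ref{rem-dw-xy} to normalize all three hypothetical codewords to a common error type with exactly one substitution) are correct and coincide with the paper's proof. The gap is in what you call ``the heart of the argument'': you assert that the two modular equations in $(p,q)$ coming from $(\text{C}1)$ and $(\text{C}2)$ admit at most two solutions, and you defer the proof of this to an auxiliary lemma that you never substantiate. But this claim \emph{is} the theorem, and the proposed route of bounding solution counts of the equations directly gives no mechanism for proving it. In particular, the two-to-one bound cannot be obtained pair by pair: genuinely confusable pairs of codewords do exist (this is Case (iv) of Remark~\ref{rem-cases}, where $d_1<e_1\leq d_2$ and $d_1\leq e_2<d_2$, and it is realizable), so no analysis of a single pair of preimages can yield list size two. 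Your sketch never identifies the three-candidate structure that actually closes the argument.

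The paper's proof supplies exactly the two ingredients your sketch is missing. First, a pairwise structural lemma (Lemma~\ref{main-lem}): if two \emph{distinct} codewords share an image under $E(\cdot,d,e)$, then both substitution positions must lie strictly between the two deletion positions. This is not proved by counting solutions of modular equations; it is proved by introducing the suffix-sum differences $u_i=\sum_{\ell=i}^n x_\ell-\sum_{\ell=i}^n x'_\ell$, showing in every configuration other than Case (iv) that $|u_i|\leq 2$ (so that $|f_j(\bm x)-f_j(\bm x')|<2n^j$ and the congruences mod $2n$, $2n^2$ upgrade to exact equalities $f_j(\bm x)=f_j(\bm x')$ --- this is the one point your sketch correctly anticipates) and that the $u_i$ change sign at most once across two intervals, whereupon a Sima--Bruck-type lemma (Lemma~\ref{PT-sgn-eq} with $m=1$) forces $u_i\equiv 0$, i.e.\ $\bm x=\bm x'$. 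Second, a short ordering argument: if three distinct codewords with deletion positions $d_1\leq d_2$, $d_3$ all satisfied the Case (iv) constraints pairwise, then whichever of the three ranges $d_3\leq d_1$, $d_1<d_3\leq d_2$, $d_2<d_3$ holds, the nested inequalities produce an impossibility such as $e_1<d_1<e_1$. Without formulating the pairwise crossing-structure lemma and this three-way contradiction --- or an equivalent substitute --- your two-to-one claim is unsupported, so the proposal as written does not constitute a proof.
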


In the rest of this section, we always assume that
$(c_0,c_1,c_2)\in\mathbb Z_4\times\mathbb Z_{2n}\times\mathbb
Z_{2n^2}$ and $\mathscr C_{n}(c_0,c_1,c_2)$ is given by Definition
\ref{code-Con}. For any $\bm x\in\{0,1\}$ and
$\{d,e\}\subseteq[n]$, let $E(\bm x,d,e)$ denote the sequence
obtained from $\bm{x}$ by deleting $x_{d}$ and substituting
$x_{e}$ with $\bar{x}_e=1-x_e~($i.e., $\bar{x}_e=1$ if $x_e=0$ and
$\bar{x}_e=0$ if $x_e=1)$. Clearly, $E(\bm x,d,e)\in\{0,1\}^{n-1}$
is uniquely determined by $\bm x,d$ and $e$. We also need the
following lemma, which will be proved in Section IV.

\begin{lem}\label{main-lem}
Suppose $\bm{x}$, $\bm{x}'\in\mathscr C_{n}(c_0,c_1,c_2)$ and
$\{d_1,e_1\}$, $\{d_2,e_2\}\subseteq[n]$ such that
$\bm{x}\neq\bm{x}'$, $d_{1}\leq d_{2}$ and
$E(\bm{x},d_1,e_1)=E(\bm{x}',d_2,e_2)$. We have $d_1<e_1\leq d_2$
and $d_1\leq e_2<d_2$.
\end{lem}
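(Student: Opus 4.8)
The plan is to extract the orderings from the three invariants baked into $\mathscr C_n(c_0,c_1,c_2)$. First I would fix the common image $\bm y=E(\bm x,d_1,e_1)=E(\bm x',d_2,e_2)$ and normalize the two errors. Since $\bm x,\bm x'\notin\{0^n,1^n\}$ satisfy $(\text{C}0)$, they share the same weight modulo $4$; as they also share $\bm y$, Remark~\ref{rem-dw-xy} shows both are genuine single-deletion single-substitution errors with a common deleted value $\delta$ and a common substituted value $\sigma$, that $\mathsf{wt}(\bm x)=\mathsf{wt}(\bm x')$, and that $d_i\neq e_i$. Writing $\bm w,\bm w'$ for $\bm x,\bm x'$ with the substituted bit flipped and using $d_1\le d_2$, the equality of the two deletions forces $\bm w$ and $\bm w'$ to agree on $[1,d_1-1]$ and on $[d_2+1,n]$ and to satisfy $w'_i=w_{i+1}$ on $[d_1,d_2-1]$: off the block $[d_1,d_2]$ the words align up to the two flips, and on the block they differ by a unit left-shift.

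The engine of the argument is the closed form for how the syndromes change under one deletion-substitution. Using the rearranged formula~\eqref{re-def-f}, the shift telescopes and one obtains, with $s:=1-2\sigma\in\{-1,1\}$,
\[ f_1(\bm x)-f_1(\bm y)=\delta\, d_1+\sum_{j=d_1+1}^{n}x_j+s\bigl(\mathbf{1}\{e_1>d_1\}-e_1\bigr), \]
and the analogous identity for $\bm x'$ with $(d_2,e_2)$, together with a degree-two version for $f_2$ carrying weights $i$ instead of $1$. Subtracting the two first-order identities, the suffix sums collapse to the block sum $\sum_{j=d_1+1}^{d_2}x_j$, which combines with $(d_1-d_2)\delta$ into a term of absolute value at most $d_2-d_1$; together with the position term this gives $|f_1(\bm x)-f_1(\bm x')|\le 2n-1<2n$, and the second-order analog gives $|f_2(\bm x)-f_2(\bm x')|<2n^2$. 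This is exactly what the moduli $2n$ and $2n^2$ in $(\text{C}1)$ and $(\text{C}2)$ are calibrated for: the congruences $f_1(\bm x)\equiv f_1(\bm x')$ and $f_2(\bm x)\equiv f_2(\bm x')$ forced by code membership upgrade to the integer equalities $f_1(\bm x)=f_1(\bm x')$ and $f_2(\bm x)=f_2(\bm x')$.

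With these equalities and $\mathsf{wt}(\bm x)=\mathsf{wt}(\bm x')$ in hand, I would read off the orderings by contradiction. If $d_1=d_2$, the block is empty and the first-order equality collapses to $s(e_2-e_1)=0$, whence $e_1=e_2$ and therefore $\bm x=\bm x'$; so $d_1<d_2$. If instead a flip lay outside the shifted block, say $e_1\le d_1$ or $e_1>d_2$ (and symmetrically for $e_2$), substituting the placement into the first- and second-order equalities leaves, after the same cancellation, a pair of integer equations in $d_1,e_1,d_2,e_2$ whose only solution with all indices in $[1,n]$ forces $\bm x=\bm x'$. Eliminating every out-of-block placement yields precisely $d_1<e_1\le d_2$ and $d_1\le e_2<d_2$.

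The main obstacle I anticipate is the bookkeeping in the last step: tracking how each flip meets the shifted block (inside, outside, or at a boundary), handling coincidences among $d_1,e_1,d_2,e_2$ and the two signs of $s$, and keeping the telescoped differences strictly inside the moduli so the congruences really do become equalities. The first-order syndrome together with the weight kills the low-order collisions (such as $d_1=d_2$), but it cannot see the shift-type collisions that preserve $f_1$; ruling those out is exactly where the second-order syndrome $f_2$, and the tightness of the bound $|f_2(\bm x)-f_2(\bm x')|<2n^2$, is needed.
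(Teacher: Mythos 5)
Your setup is sound and, up to a point, matches the paper's own route: normalizing via Remark~\ref{rem-dw-xy} (Remark~\ref{rem-wt-xd} in the paper's notation) to get $\mathsf{wt}(\bm x)=\mathsf{wt}(\bm x')$, common deleted and substituted values, and $x_{d_1}=x'_{d_2}$; the alignment of $\bm x$ and $\bm x'$ off the window $[d_1,d_2]$ and a unit shift inside it; and the bound $|f_j(\bm x)-f_j(\bm x')|<2n^j$ for $j=1,2$, so that the congruences in (C1) and (C2) of Definition~\ref{code-Con} upgrade to the integer equalities $f_1(\bm x)=f_1(\bm x')$ and $f_2(\bm x)=f_2(\bm x')$. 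All of this is exactly what the paper does, via the suffix-difference sequence $u_i=\sum_{\ell=i}^n x_\ell-\sum_{\ell=i}^n x'_\ell$ and \eqref{re-def-f}.

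The genuine gap is your final step. You claim that in each out-of-window placement of $e_1,e_2$, the two syndrome equalities reduce ``after the same cancellation'' to a pair of integer equations in $d_1,e_1,d_2,e_2$ alone, whose only admissible solution forces $\bm x=\bm x'$. They do not: since $f_j(\bm x)-f_j(\bm x')=\sum_{i=1}^n u_i\, i^{j-1}$ and, for example in the paper's Case (i), $u_i=x_i-x'_{d_2}$ for $i\in[d_1+1,d_2]$, the equalities still contain the unknown data sums $\sum_{i=d_1+1}^{d_2}x_i$ and $\sum_{i=d_1+1}^{d_2}i\,x_i$ over the shifted block, so they are not equations in the four positions. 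More fundamentally, equality of the weight, $f_1$, and $f_2$ is only three moments of $(u_i)$ and cannot by itself distinguish $\bm x$ from $\bm x'$: plenty of distinct pairs share all three. What closes the argument in the paper is a structural fact your proposal never establishes: in every case of Remark~\ref{rem-cases} other than (iv), the sequence $(u_i)$ is sign-constant on each of at most two intervals $[1,p_1]$ and $[p_1+1,n]$ (with $p_1$ chosen case by case, e.g.\ $p_1=\lambda_2$ in Case (i)) and satisfies $|u_i|\leq 2$; only then does the moment-type Lemma~\ref{PT-sgn-eq} (with $m=1$), applied to the equalities $f_1(\bm x)=f_1(\bm x')$ and $f_2(\bm x)=f_2(\bm x')$, yield $u_i=0$ for all $i$, i.e., $\bm x=\bm x'$. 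Proving this one-sign-change property is precisely where the case-by-case geometry of where $e_1,e_2$ sit relative to $[d_1,d_2]$ gets used (the paper's Claims 1--5), and it is the missing engine of your elimination. Your $d_1=d_2$ sub-case does survive as stated, since there the block is empty, $(u_i)$ is constant $\pm 1$ on a single interval, and $f_1$-equality alone suffices --- but that is the only placement your argument actually rules out.
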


In formally speaking, if there exists a $\bm y\in\{0,1\}^{n-1}$
such that $\bm y$ can be obtained from $\bm x$ and $\bm x'$ by
deleting one symbol and substituting one symbol, then the two
substituted symbols are both located between the two deleted
symbols.

\vspace{2pt}Using Lemma \ref{main-lem}, we can prove Theorem
\ref{main-thm} as follows.

\begin{proof}[Proof of Theorem \ref{main-thm}]
By the pigeonhole principle, there exists a
$(c_0,c_1,c_2)\in\mathbb Z_4\times\mathbb Z_{2n}\times\mathbb
Z_{2n^2}$ such that the code $\mathscr C_{n}(c_0,c_1,c_2)$ has
size at least $\frac{2^n-2}{16n^3}$, hence the redundancy of
$\mathscr C_{n}(c_0,c_1,c_2)$ is at most $3\log n+4$.

It remains to prove that $\mathscr C_{n}(c_0,c_1,c_2)$ is
list-decodable from single-deletion and single-substitution with
list size $2$. We need to prove that for any given
$\bm{y}\in\{0,1\}^{n-1}$, there exist at most two codewords in
$\mathscr C_{n}(c_0,c_1,c_2)$, from which $\bm{y}$ can be obtained
by one deletion and at most one substitution. This can be proved
by contradiction as follows.

Suppose $\bm{x},\bm{x}'$ and $\bm x''$ are three distinct
sequences in $\mathscr C_{n}(c_0,c_1,c_2)$ from which $\bm{y}$ can
be obtained by one deletion and at most one substitution. By
Remark \ref{rem-dw-xy}, we can assume $\bm
y=E(\bm{x},d_1,e_1)=E(\bm{x}',d_2,e_2)=E(\bm{x}'',d_3,e_3)$.
Without loss of generality, assume $d_1\leq d_2$.

First, consider $\bm x$ and $\bm x'$. By Lemma \ref{main-lem}, we
have
\begin{align}\label{eq-x-p1}
d_1<e_1\leq d_2\end{align} and \begin{align}\label{eq-x-p2}
d_1\leq e_2<d_2.\end{align}

For further discussions, we have the following three cases.

Case 1: $d_3\leq d_1$. Considering $\bm x$ and $\bm x''$, by Lemma
\ref{main-lem}, we have $d_3\leq e_1<d_1$ and $d_3<e_3\leq d_1$.
Combining with \eqref{eq-x-p1}, we have $e_1<d_1<e_1$, a
contradiction.

Case 2: $d_1<d_3\leq d_2$. Considering $\bm x$ and $\bm x''$, by
Lemma \ref{main-lem}, we have $d_1<e_1\leq d_3$ and $d_1\leq
e_3<d_3$. On the other hand, considering $\bm x'$ and $\bm x''$,
by Lemma \ref{main-lem}, we have $d_3<e_3\leq d_2$ and $d_3\leq
e_2<d_2$. Hence, we obtain $e_3<d_3<e_3$, a contradiction.

Case 3: $d_2<d_3$. Considering $\bm x'$ and $\bm x''$, by Lemma
\ref{main-lem}, we have $d_2<e_2\leq d_3$ and $d_2\leq e_3<d_3$.
Combining with \eqref{eq-x-p2}, we get $e_2<d_2<e_2$, a
contradiction.

From the above discussions, we can conclude that there exist at
most two codewords in $\mathscr C_{n}(c_0,c_1,c_2)$ from which
$\bm{y}$ can be obtained by one deletion and at most one
substitution, which proves Theorem \ref{main-thm}.
\end{proof}

\section{Proof of Lemma \ref{main-lem}}

In this section, we prove Lemma \ref{main-lem}. We always suppose
that $\bm{x}$, $\bm{x}'\in\mathscr C_{n}(c_0,c_1,c_2)$, and
$\{d_1,e_1\}$, $\{d_2,e_2\}\subseteq[n]$ such that $d_{1}\leq
d_{2}$ and $E(\bm{x},d_1,e_1)=E(\bm{x}',d_2,e_2)$. We first
enumerate all the possible cases according to the order of
$d_1,e_1,d_2,e_2$.

\begin{rem}\label{rem-cases}
Consider $e_1, d_1$ and $d_2$. Since $d_1\leq d_2$, we have three
cases: $e_1<d_1$, $d_1<e_1\leq d_2$ and $d_2<e_1$. Similarly, for
$e_2, d_1$ and $d_2$, we have three cases: $e_2<d_1$, $d_1\leq
e_2<d_2$ and $d_2<e_2$. Combining these two scenarios we have a
total of nine cases to consider. However, we can merge some cases
and consider the following six cases.
\begin{itemize}
 \item[(i)] $e_1<d_1\leq d_{2}$ and $e_2<d_1\leq d_{2}$.
 \item[(ii)] $e_1<d_1\leq e_2<d_{2}$ or $e_2<d_1<e_1\leq d_{2}$.
 \item[(iii)] $e_1<d_1\leq d_2<e_{2}$ or $e_2<d_1\leq d_{2}<e_1$.
 \item[(iv)] $d_1<e_1\leq d_2$ and $d_1\leq e_{2}<d_2$.
 \item[(v)] $d_1<e_1\leq d_2<e_{2}$ or $d_1\leq e_2<d_{2}<e_1$.
 \item[(vi)] $d_1\leq d_2<e_1$ and $d_1\leq d_{2}<e_2$.
\end{itemize}
\end{rem}

\vspace{2pt}We will prove that $\bm x=\bm x'$ for all cases in
Remark \ref{rem-cases} except for Case (iv). Hence, if $\bm
x\neq\bm x'$, then it must fall into Case (iv), that is,
$d_1<e_1\leq d_2$ and $d_1\leq e_2<d_2$.

Denote $\bm{x}=(x_1,x_2,\ldots,x_n)$ and
$\bm{x}'=(x'_1,x'_2,\ldots,x'_n)$. For each $i\in[n]$, let
\begin{align}\label{def-u-i}
u_i\triangleq\sum_{\ell=i}^nx_\ell-\sum_{\ell=i}^nx'_\ell.\end{align}
To prove $\bm{x}=\bm{x}'$, it suffices to prove $u_i=0$ for all
$i\in[n]$.

The following lemma will be used in our discussions. $($Recall
that for each positive integer $j$ and $\bm x\in\{0,1\}^n$,
$f_j(\bm x)$ is the $j$th-order VT syndrome of $\bm{x}$ defined by
\eqref{def-f} or \eqref{re-def-f}.$)$

\begin{lem}\label{PT-sgn-eq}
Let $m$ be a fixed positive integer. Suppose
$\big(f_1(\bm{x}),\ldots,f_{m+1}(\bm{x})\big)
=\big(f_1(\bm{x}'),\ldots,f_{m+1}(\bm{x}')\big)$ and there exist
$m$ positive integers, say $p_1,p_2,\ldots,p_{m}$, such that
$1\leq p_1<p_2<\cdots<p_{m}\leq n$ and for each $j\in[m+1]$,
either $u_i\geq 0$ for all $i\in[p_{j-1}+1,p_j]$ or $u_i\leq 0$
for all $i\in[p_{j-1}+1,p_j]$, where $p_0=1$ and $p_{m+1}=n$. Then
$u_i=0$ for all $i\in[n]$, and hence we have $\bm x=\bm x'$.
\end{lem}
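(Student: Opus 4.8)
The plan is to reinterpret the hypothesis on the VT syndromes as the vanishing of the low-order power-sum moments of the integer sequence $(u_i)_{i=1}^n$, and then defeat it with a single sign-matching test polynomial. Combining the rearranged syndrome \eqref{re-def-f} with \eqref{def-u-i} gives $f_j(\bm x)-f_j(\bm x')=\sum_{i=1}^n u_i\,i^{\,j-1}$ for every $j$, so the assumption $f_j(\bm x)=f_j(\bm x')$ for all $j\in[m+1]$ is precisely $\sum_{i=1}^n u_i\,i^{\,k}=0$ for $k=0,1,\ldots,m$. Equivalently, $\sum_{i=1}^n u_i\,g(i)=0$ for \emph{every} polynomial $g$ of degree at most $m$. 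The whole game is then to exhibit one such $g$ for which the same sum is visibly a sum of nonnegative terms.

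Here the sign hypothesis enters. The breakpoints $p_1<\cdots<p_m$ partition $[2,n]$ into the $m+1$ blocks $[p_{j-1}+1,p_j]$, and $u_i$ is sign-definite on each block, so the sign pattern of $(u_2,\ldots,u_n)$ reverses at most $m$ times, and only possibly at one of $p_1,\ldots,p_m$. I would build $g$ by placing one simple root at the half-integer $p_j+\tfrac12$ for each breakpoint at which the sign genuinely flips---at most $m$ of them, so $\deg g\le m$---and then fixing the global sign so that $g(i)$ carries the same sign as $u_i$ on every block. Two properties matter: the roots are half-integers, hence $g(i)\neq 0$ for all integers $i\in[n]$; and $g$ is sign-constant between consecutive roots, which is exactly what lets one degree-$\le m$ polynomial track all the sign reversals of $u$ at once. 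By construction $u_i\,g(i)\ge 0$ for every $i\in[2,n]$, strictly so whenever $u_i\neq 0$.

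To finish I would play the two facts against each other: $0=\sum_{i=1}^n u_i\,g(i)$ while every term with $i\ge 2$ is nonnegative. The main obstacle---and the one place the argument is genuinely delicate---is the single index $i=1$, which lies outside the controlled blocks $[p_{j-1}+1,p_j]$ and whose term $u_1\,g(1)$ carries no a priori sign; if it could be negative it would spoil the termwise nonnegativity. I expect to dispose of it by combining the lowest moment $\sum_{i=1}^n u_i=0$ with the bound $u_1-u_2=x_1-x_1'\in\{-1,0,1\}$ forced by $\bm x,\bm x'\in\{0,1\}^n$, which ties the uncontrolled endpoint to the first controlled block and lets the nonnegativity collapse to $u_i=0$ on $[2,n]$ and then to $u_1=0$, i.e.\ $\bm x=\bm x'$. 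Everything else is routine bookkeeping; the only real tension is that $m+1$ moments (degree $\le m$) must be matched \emph{exactly} against at most $m$ sign reversals, leaving no slack in the placement of the roots and in the choice of the global sign.
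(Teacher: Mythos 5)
Your core construction is sound and is in fact essentially the proof the paper has in mind: the paper omits its own proof, deferring to \cite[Proposition 2]{Sima19-2-1}, and that argument is exactly your reduction of the syndrome equalities via \eqref{re-def-f} and \eqref{def-u-i} to the vanishing of the moments $\sum_{i=1}^n u_i i^k$ for $k=0,\ldots,m$, followed by a sign-matching test polynomial of degree at most $m$ with half-integer roots placed at the genuine sign flips. Up to the boundary index, everything you write is correct, including the observation that the half-integer roots guarantee $g(i)\neq 0$ on the integers, so that termwise nonnegativity plus a zero sum forces $u_i=0$ wherever the sign control applies.

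The genuine gap is your treatment of $i=1$. You propose to dispose of the uncontrolled term $u_1g(1)$ using only $\sum_{i=1}^n u_i=0$ and $u_1-u_2=x_1-x_1'\in\{-1,0,1\}$, but these facts cannot suffice, because the lemma as literally stated (with $p_0=1$, so the blocks only cover $[2,n]$) is \emph{false} for arbitrary binary $\bm x,\bm x'$ with equal syndromes. Take $n=7$, $\bm x=1100010$, $\bm x'=0001100$; then $(u_1,\ldots,u_7)=(1,0,-1,-1,0,1,0)$, so $\sum_i u_i=0$ and $\sum_i iu_i=0$ (hence $f_1(\bm x)=f_1(\bm x')$ and $f_2(\bm x)=f_2(\bm x')$ exactly), all consecutive differences lie in $\{-1,0,1\}$, and with $m=1$, $p_1=5$ one has $u_i\leq 0$ on $[2,5]$ and $u_i\geq 0$ on $[6,7]$ --- yet $u\not\equiv 0$. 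Concretely, with your test polynomial $g(i)=i-\tfrac{11}{2}$ the boundary term $u_1g(1)=-\tfrac{9}{2}$ exactly cancels the positive terms $\tfrac{5}{2}+\tfrac{3}{2}+\tfrac{1}{2}$, so no collapse occurs. The missing ingredient is $u_1=0$, i.e.\ $\mathsf{wt}(\bm x)=\mathsf{wt}(\bm x')$: in the paper this is supplied by the standing assumptions of Section IV through condition (C0) and Remark \ref{rem-wt-xd} (and it excludes the example above, whose weights are $3$ and $2$, incongruent mod $4$); moreover in every application of Lemma \ref{PT-sgn-eq} one even has $u_i=0$ on an initial segment, so the sign condition really holds on $[1,p_1]$, i.e.\ the lemma should be read with $p_0=0$. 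Once $u_1=0$ is added to your hypotheses, the term at $i=1$ vanishes and your polynomial argument closes immediately; without it, no bookkeeping at the endpoint can rescue the statement.
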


The proof of Lemma \ref{PT-sgn-eq} is omitted because it is
(implicitly) contained in the proof of \cite[Proposition
2]{Sima19-2-1}.

The following simple remark is also useful in our proof.

\begin{rem}\label{rem-wt-xd}
Since $\mathsf{wt}(\bm x)\equiv\mathsf{wt}(\bm x')\equiv
c_0~(\text{mod}~4)~($because $\bm{x}$, $\bm{x}'\in\mathscr
C_{n}(c_0,c_1,c_2))$ and $E(\bm{x},d_1,e_1)=E(\bm{x}',d_2,e_2)$,
then by Remark \ref{rem-dw-xy}, we have $x_{d_1}=x'_{d_2}$ and
$\mathsf{wt}(\bm x)=\mathsf{wt}(\bm x')$.
\end{rem}

In the following five subsections, we will prove that for all
cases in Remark \ref{rem-cases} except for Case (iv), we have
$\big(f_1(\bm{x}),f_{2}(\bm{x})\big)
=\big(f_1(\bm{x}'),f_{2}(\bm{x}')\big)$, and there exists a
$p_1\in[n]$ such that for each $j\in\{1,2\}$, either $u_i\geq 0$
for all $i\in[p_{j-1}+1,p_j]$ or $u_i\leq 0$ for all
$i\in[p_{j-1}+1,p_j]$, where $p_0=1$ and $p_{2}=n$. Then by Lemma
\ref{PT-sgn-eq} $($for the special case of $m=1)$, we have $\bm
x=\bm x'$. Thus, if $\bm x\neq\bm x'$, then it must fall into Case
(iv), that is, $d_1<e_1\leq d_2$ and $d_1\leq e_2<d_2$.

\subsection{Proof of $\bm{x}=\bm{x}'$ for Case (i)}

For this case, we have $e_1<d_1\leq d_{2}$ and $e_2<d_1\leq
d_{2}$. If $e_1=e_2$, then $\bm x=\bm x'$.\footnote{If
$e_1=e_2<d_1\leq d_{2}$, then there is a $\bm y'\in\{0,1\}^{n-1}$
such that $\bm y'$ can be obtained from $\bm x~($resp. $\bm x')$
by a single deletion. By (C1) of Definition \ref{code-Con},
$\mathscr C_{n}(c_0,c_1,c_2)$ is a single-deletion correcting
code, so we can obtain $\bm x=\bm x'$.} Therefore, we assume
$e_1\neq e_2$. To simplify the presentation, let
$\lambda_1=\min\{e_1,e_2\}$ and $\lambda_2=\max\{e_1,e_2\}$. Then
$1\leq \lambda_1<\lambda_2<d_1\leq d_2$. Since
$E(\bm{x},d_1,e_1)=E(\bm{x}',d_2,e_2)$, we can obtain\footnote{In
fact, let $\bm y=E(\bm{x},d_1,e_1)=E(\bm{x}',d_2,e_2)$, which
means that $\bm y$ can be obtained from $\bm{x}$ by deleting
$x_{d_1}$ and substituting $x_{e_1}$ with
$\bar{x}_{e_1}=1-x_{e_1}$, and $\bm y$ can also be obtained from
$\bm{x}'$ by deleting $x'_{d_2}$ and substituting $x'_{e_2}$ with
$\bar{x}'_{e_2}=1-x'_{e_2}$. Then \eqref{xxp-case1} can be
obtained by comparing the elements of $\bm x$ and $\bm x'$ with
the elements of $\bm y$: $x_i=y_i=x'_i$ for
$i\in[1,d_1-1]\backslash \{\lambda_1, \lambda_2\}$;
$x_i=y_{i-1}=x'_{i-1}$ for $i\in[d_1+1,d_2]$; and $x_i=y_i=x'_i$
for $i\in[d_2+1,n]$.}
\begin{equation}\label{xxp-case1} x_i=\!\left\{\!\begin{aligned}
&x'_i, ~~~~\text{for}~i\in[1,d_1-1]\backslash \{\lambda_1,
\lambda_2\};\\
&x'_{i-1}, ~\text{for}~i\in[d_1+1,d_2];\\
&x'_i, ~~~~\text{for}~i\in[d_2+1,n].
\end{aligned}\right.
\end{equation}
Moreover, we have $x_{\lambda_1}\neq x'_{\lambda_1}$ and
$x_{\lambda_2}\neq x'_{\lambda_2}$ because of the substitution
error. According to \eqref{xxp-case1}, this case can be
illustrated by Fig. 1.
\begin{figure}[htbp]\label{tp-intv-3-1}
\begin{center}
\includegraphics[height=1.1cm]{case.1}
\end{center}
\caption{Illustration of Case (i): The bits (symbols) of each
sequence is denoted by a row of black dots, where each column
corresponds to the two symbols at the same position in the
respective sequences. Each pair of bits connected by a solid
segment are of equal value, while those connected by a dashed
segment have different values because of the substitution error.}
\end{figure}

We can use \eqref{xxp-case1} or Fig. 1 to simplify $u_i$ for each
$i\in[n]~($In fact, Fig. 1 is more intuitive than
\eqref{xxp-case1}.$)$ as follows.

First, we simplify
$u_i=\sum_{\ell=i}^nx_\ell-\sum_{\ell=i}^nx'_\ell$ for
$i\in[1,\lambda_1]$. From Fig. 1 we can see that all terms in
$\sum_{\ell=i}^nx_\ell$ can be cancelled by their corresponding
terms in $\sum_{\ell=i}^nx'_\ell$ except for $x_{\lambda_1},
x_{\lambda_2}$ and $x_{d_1}$, and all terms in
$\sum_{\ell=i}^nx'_\ell$ can be cancelled except for
$x'_{\lambda_1}, x'_{\lambda_2}$ and $x'_{d_2}$, so we have
\begin{align}\label{case1-u1-0}
u_i&=\sum_{\ell=i}^nx_\ell-\sum_{\ell=i}^nx'_\ell\nonumber\\
&=x_{\lambda_1}+x_{\lambda_2}+x_{d_1}-x'_{\lambda_1}-
x'_{\lambda_2}-x'_{d_2}\end{align} In particular, we have
$\mathsf{wt}(\bm x)-\mathsf{wt}(\bm x')
=\sum_{\ell=1}^nx_\ell-\sum_{\ell=1}^nx'_\ell=
u_1=x_{\lambda_1}+x_{\lambda_2}+x_{d_1}-x'_{\lambda_1}-
x'_{\lambda_2}-x'_{d_2}.$ Note that by Remark \ref{rem-wt-xd},
$\mathsf{wt}(\bm x)=\mathsf{wt}(\bm x')$ and $x_{d_1}=x'_{d_2}$.
Therefore, by \eqref{case1-u1-0}, we have
\begin{align}\label{dx-zero-case1}
0&=\mathsf{wt}(\bm x)-\mathsf{wt}(\bm
x')\nonumber\\&=x_{\lambda_1}+x_{\lambda_2}+x_{d_1}-x'_{\lambda_1}-
x'_{\lambda_2}-x'_{d_2}\nonumber\\
&=x_{\lambda_1}+x_{\lambda_2}-x'_{\lambda_1}-
x'_{\lambda_2}\end{align} and
$$u_i=x_{\lambda_1}+x_{\lambda_2}-x'_{\lambda_1}- x'_{\lambda_2}=0,
~\forall\!~i\in[1,\lambda_1].$$

Similarly, from Fig. 1, by cancelling the corresponding equivalent
terms in $\sum_{\ell=i}^nx_\ell$ and $\sum_{\ell=i}^nx'_\ell$, we
can obtain:
\begin{itemize}
\item $u_i=x_{\lambda_2}+x_{d_1}-x'_{\lambda_2}-x'_{d_2}
 =x_{\lambda_2}-x'_{\lambda_2}$ for each $i\in[\lambda_1+1,\lambda_2]$,
 where the second equality holds because $x_{d_1}
 =x'_{d_2}~($according to Remark \ref{rem-wt-xd}$)$.
 \item $u_i=x_{d_1}-x'_{d_2}=0$ for each $i\in[\lambda_2+1,d_1]$.
 \item $u_i=x_{i}-x'_{d_2}$ for each $i\in[d_1+1,d_2]$.
 \item $u_i=0$ for each $i\in[d_2+1,n]$.
\end{itemize}
Collectively, we have
\begin{equation}\label{xxp-case1-1} u_i\!=\!\left\{\!\begin{aligned}
&0, ~~~~~~~~\!~~~~\text{for}~i\in[1,\lambda_1];\\
&x_{\lambda_2}-x'_{\lambda_2},
~\text{for}~i\in[\lambda_1+1,\lambda_2];\\
&0, ~~~~~~~~\!~~~~\text{for}~i\in[\lambda_2+1,d_1];\\
&x_{i}-x'_{d_2}, ~~~\text{for}~i\in[d_1+1,d_2];\\
&0, ~~~~~~~~\!~~~~\text{for}~i\in[d_2+1,n].
\end{aligned}\right.
\end{equation}
Moreover, we have the following claim.

\emph{Claim 1}: Let $p_1=\lambda_2$. Then for each $j\in\{1,2\}$,
either $u_i\geq 0$ for all $i\in[p_{j-1}+1,p_j]$ or $u_i\leq 0$
for all $i\in[p_{j-1}+1,p_j]$, where $p_0=1$ and $p_2=n$.
Moreover, $|u_i|\leq 1$ for all $i\in[n]$.

\begin{proof}[Proof of Claim 1]
For $i\in[1,\lambda_2]$, by \eqref{xxp-case1-1}, we have $u_i=0$
or $u_i=x_{\lambda_2}-x'_{\lambda_2}$. If $x'_{\lambda_2}=0$, then
$u_i\in\{0,1\}$ for all $i\in[\lambda_2+1,n]$; if
$x'_{\lambda_2}=1$, then $u_i\in\{-1,0\}$ for all
$i\in[\lambda_2+1,n]$.

For $i\in[\lambda_2+1,n]$, by \eqref{xxp-case1-1}, we have $u_i=0$
or $u_i=x_{i}-x'_{d_2}$. If $x'_{d_2}=0$, then $u_i\in\{0,1\}$ for
all $i\in[\lambda_2+1,n]$; if $x'_{d_2}=1$, then $u_i\in\{-1,0\}$
for all $i\in[\lambda_2+1,n]$.

Thus, $p_1=\lambda_2$ satisfies the desired property and
$|u_i|\leq 1$ for all $i\in[n]$, which proves Claim 1.
\end{proof}

\vspace{2pt}By \eqref{re-def-f}, for $j=1,2$, we have
\begin{align}\label{xxp-case1-2}|f_j(\bm x)-f_j(\bm
x')|&=\left|\sum_{i=1}^n\left(\sum_{\ell=i}^nx_\ell\right)i^{j-1}-
\sum_{i=1}^n\left(\sum_{\ell=i}^nx'_\ell\right)i^{j-1}\right|\nonumber\\
&=\left|\sum_{i=1}^nu_ii^{j-1}\right|\nonumber\\
&\leq\sum_{i=1}^ni^{j-1}\nonumber\\&<n^j,\end{align} where the
first inequality holds because by Claim 1, $|u_i|\leq 1$ for all
$i\in[n]$. Note that by (C1) and (C2) of Definition
\ref{code-Con}, $f_j(\bm x)\equiv f_j(\bm x')~(\text{mod}~2n^j)$,
so by \eqref{xxp-case1-2}, we have $f_j(\bm x)=f_j(\bm x')$. Thus,
by Claim 1 and Lemma \ref{PT-sgn-eq}, we have $\bm x=\bm x'$.

\begin{exam}
To help the reader to understand the proof, consider an
example with
\begin{align*}\bm x\!~&=1101101000101110,\\
\bm y\!~&=110111100101110,\\
\bm x'&=1001111001011010,\end{align*} where $n=16$. We can check
that $\bm y$ can be obtained from $\bm x$ by deleting $x_{10}=0$
and substituting $x_6=0$ with $y_6=\bar{x}_6=1$, and $\bm y$ can
also be obtained from $\bm x'$ by deleting $x'_{14}=0$ and
substituting $x'_2=0$ with $y_2=\bar{x}'_2=1$. Hence, $\bm y=E(\bm
x,10,6)=E(\bm x',14,2)$, that is, $d_1=10$, $e_1=6$, $d_2=14$ and
$e_2=2$. Since $e_2<e_1$, we take $\lambda_1=e_2=2$ and
$\lambda_2=e_1=6$. For this example, $\bm x$ and $\bm x'$ can be
illustrated by Fig. 2, which is an instance of Fig. 1. It is easy
to check that:
\begin{itemize}
 \item For $i\in[\lambda_1]=\{1,2\}$,
 $u_i=\sum_{\ell=i}^nx_\ell-\sum_{\ell=i}^nx_\ell'=x_{\lambda_1}
 +x_{\lambda_2}+x_{d_1}-x_{\lambda_1}'-x_{\lambda_2}'-x'_{d_2}=0$;
 \vspace{2pt}\item For $i\in[\lambda_1+1,\lambda_2]=\{3,4,5,6\}$,
 $u_i=\sum_{\ell=i}^nx_\ell-\sum_{\ell=i}^nx_\ell'=
 x_{\lambda_2}+x_{d_1}-x_{\lambda_2}'-x'_{d_2}
 =x_{\lambda_2}-x_{\lambda_2}'=-1$;
 \vspace{2pt}\item For $i\in[\lambda_2+1,d_1]=\{7,8,9,10\}$,
 $u_i=\sum_{\ell=i}^nx_\ell-\sum_{\ell=i}^nx_\ell'=
 x_{d_1}-x_{d_2}'=0$;
 \vspace{2pt}\item For $i\in[d_1+1,d_2]=\{11,12,13,14\}$,
 $u_i=\sum_{\ell=i}^nx_\ell-\sum_{\ell=i}^nx_\ell'
 =x_{i}-x'_{d_2}=x_i\in\{0,1\}$;
 \vspace{2pt}\item For $i\in[d_2+1,n]=\{15,16\}$,
 $u_i=\sum_{\ell=i}^nx_\ell-\sum_{\ell=i}^nx_\ell'
 =0$;
\end{itemize}
In summary, we have \begin{align*} &(u_1,u_2,\cdots\!,u_n)\\
&=(0,0,-1,-1,-1,-1,0,0,0,0,1,0,1,1,0,0).\end{align*} We can see
that $u_i\leq 0$ for all $i\in[1,\lambda_2]=\{1,2,\cdots,6\}$, and
$u_i\geq 0$ for all $i\in[\lambda_2+1,n]=\{7,8,\cdots,16\}$.
\end{exam}

\begin{figure}[htbp]\label{tp-eg-3-1}
\begin{center}
\includegraphics[height=1.05cm]{fig-exm.1}
\end{center}
\caption{An example of Case (i).}
\end{figure}

\subsection{Proof of $\bm{x}=\bm{x}'$ for Case (ii)}

For this case, we have $e_1<d_1\leq e_2<d_{2}$ or $e_2<d_1<e_1\leq
d_{2}$. If $e_1<d_1\leq e_2<d_{2}$, let $\lambda_1=e_1$ and
$\lambda_2=e_2+1$; If $e_2<d_1<e_1\leq d_{2}$, let $\lambda_1=e_2$
and $\lambda_2=e_1$. Then for both cases, we always have
$\lambda_1<d_1<\lambda_2\leq d_2$. Since
$E(\bm{x},d_1,e_1)=E(\bm{x}',d_2,e_2)$, analogous to
\eqref{xxp-case1}, we can obtain
\begin{equation}\label{xxp-case2} x_i=\!\left\{\!\begin{aligned}
&x'_i, ~~~~\text{for}~i\in[1,d_1-1]\backslash \{\lambda_1\},\\
&x'_{i-1}, ~\text{for}~i\in[d_1+1,d_2]\backslash \{\lambda_2\},\\
&x'_i, ~~~~\text{for}~i\in[d_2+1,n].
\end{aligned}\right.
\end{equation}
Moreover, we have $x_{\lambda_1}\neq x'_{\lambda_1}$ and
$x_{\lambda_2}\neq x'_{\lambda_2-1}$ because of the substitution
error. According to \eqref{xxp-case2}, this case can be
illustrated by Fig. 3.
\begin{figure}[htbp]\label{tp-intv-3-2}
\begin{center}
\includegraphics[height=1.1cm]{case.2}
\end{center}
\caption{Illustration of Case (ii).}
\end{figure}

By Remark \ref{rem-wt-xd}, we have $\mathsf{wt}(\bm
x)=\mathsf{wt}(\bm x')$ and $x_{d_1}=x'_{d_2}$. Then by
\eqref{xxp-case2} or Fig. 3, and through a cancelling process
similar to Case (i), we can obtain $0=\mathsf{wt}(\bm
x)-\mathsf{wt}(\bm
x')=\sum_{\ell=1}^nx_\ell-\sum_{\ell=1}^nx_\ell'
=x_{\lambda_1}+x_{\lambda_2}+x_{d_1}-x'_{\lambda_1}-
x'_{\lambda_2-1}-x'_{d_2}=
x_{\lambda_1}+x_{\lambda_2}-x'_{\lambda_1}- x'_{\lambda_2-1}$ and
\begin{equation}\label{xxp-case2-1} u_i\!=\!\left\{\!\begin{aligned}
&0, ~~~~~~~~\!~~~~\text{for}~i\!\in\![1,\lambda_1];\\
&x_{\lambda_2}-x'_{\lambda_2-1},
~\text{for}~i\in[\lambda_1+1,d_1];\\
&x_{i}+x_{\lambda_2}-x'_{\lambda_2-1}-x'_{d_2},
~\text{for}~i\in[d_1+1,\lambda_2-1];\\
&x_{i}-x'_{d_2}, ~~~\text{for}~i\in[\lambda_2,d_2];\\
&0, ~~~~~~~~\!~~~~\text{for}~i\in[d_2+1,n].
\end{aligned}\right.
\end{equation}
Moreover, we have the following Claim.

\emph{Claim 2}: Let $p_1=\lambda_2-1$. Then for each
$j\in\{1,2\}$, either $u_i\geq 0$ for all $i\in[p_{j-1}+1,p_j]$ or
$u_i\leq 0$ for all $i\in[p_{j-1}+1,p_j]$, where $p_0=1$ and
$p_2=n$. Moreover, we have $|u_i|\leq 2$ for all $i\in[n]$.

\begin{proof}[Proof of Claim 2]
First consider $i\in[\lambda_2,n]$. By \eqref{xxp-case2-1},
$u_i=0$ or $u_i=x_{i}-x'_{d_2}$. Clearly, if $x'_{d_2}=0$, then
$u_i\geq 0$ for all $i\in[\lambda_2,n]$; if $x'_{d_2}=1$, then
$u_i\leq 0$ for all $i\in[\lambda_2,n]$.

Now, consider $i\in[1,\lambda_2-1]$. Note that $x_{\lambda_2}\neq
x'_{\lambda_2-1}$. Then we have
$x_{\lambda_2}-x'_{\lambda_2-1}\in\{-1,1\}$. We need to consider
the following two subcases.

Case (ii.1): $x_{\lambda_2}-x'_{\lambda_2-1}=1$.

By \eqref{xxp-case2-1}, we have
\begin{equation*} u_i\!=\!\left\{\!\begin{aligned}
&0, ~~~~~~~~\text{for}~i\!\in\![1,\lambda_1];\\
&1, ~~~~~~~~\text{for}~i\in[\lambda_1+1,d_1];\\
&x_{i}+1-x'_{d_2}, ~\text{for}~i\in[d_1+1,\lambda_2-1].
\end{aligned}\right.
\end{equation*}
Note that $x_{i}\geq 0$ and $1-x'_{d_2}\geq 0~($because
$x'_{d_2}\in\{0,1\})$. Then $u_i\geq 0$ for all
$i\in[1,\lambda_2-1]$.

Case (ii.2): $x_{\lambda_2}-x'_{\lambda_2-1}=-1$.

By \eqref{xxp-case2-1}, we have
\begin{equation*} u_i\!=\!\left\{\!\begin{aligned}
&0, ~~~~~~~~\text{for}~i\!\in\![1,\lambda_1];\\
&-1, ~~~~~~~\text{for}~i\in[\lambda_1+1,d_1];\\
&x_{i}-1-x'_{d_2}, ~\text{for}~i\in[d_1+1,\lambda_2-1].
\end{aligned}\right.
\end{equation*}
Note that $x_{i}-1\leq 0~($because $x'_{d_2}\in\{0,1\})$ and
$-x'_{d_2}\leq 0$. Then $u_i\leq 0$ for all $i\in[1,\lambda_2-1]$.

Thus, $p_1=\lambda_2-1$ satisfies the desired property.

Finally, note that $|x_{\lambda_2}-x'_{\lambda_2-1}|\leq 1$ and
$|x_{i}-x'_{d_2}|\leq 1$. Then it is easy to see from
\eqref{xxp-case2-1} that $|u_i|\leq 2$ for all $i\in[n]$, which
proves Claim 2.
\end{proof}

\vspace{2pt} Similar to Case (i), by \eqref{re-def-f} and Claim 2,
for $j=1,2$, we have
\begin{align*}|f_j(\bm x)-f_j(\bm
x')|\leq\sum_{i=1}^n|u_i|i^{j-1}\leq\sum_{i=1}^n2i^{j-1}<2n^j.
\end{align*}
On the other hand, by (C1) and (C2) of Definition \ref{code-Con},
we have $f_j(\bm x)\equiv f_j(\bm x')~(\text{mod}~2n^j)$, so
$f_j(\bm x)=f_j(\bm x')$. Then by Claim 2 and Lemma
\ref{PT-sgn-eq}, we have $\bm x=\bm x'$.

\begin{exam}
Consider an example with
\begin{align*}
\bm x\!~&=1001011101001110,\\
\bm y\!~&=110111101001110,\\
  \bm x'&=1101111000011010,\end{align*} where $n=16$. We can check
that $\bm y$ can be obtained from $\bm x$ by deleting $x_{5}=0$
and substituting $x_{2}=0$ with $y_{2}=\bar{x}_{2}=1$, and $\bm y$
can also be obtained from $\bm x'$ by deleting $x'_{14}=0$ and
substituting $x'_{9}=0$ with $y_9=\bar{x}'_9=1$. Hence, we have
$\bm y=E(\bm x,5,2)=E(\bm x',14,9)$, that is, $d_1=5$, $e_1=2$,
$d_2=14$ and $e_2=9$. Since $e_1<d_1<e_2<d_2$, we take
$\lambda_1=e_1=2$ and $\lambda_2=e_2+1=10$. For this example, $\bm
x$ and $\bm x'$ can be illustrated by Fig. 4, which is an instance
of Fig. 3. It is easy to check that:
\begin{itemize}
 \item For $i\in[\lambda_1]=\{1,2\}$,
 $u_i=\sum_{\ell=i}^nx_\ell-\sum_{\ell=i}^nx_\ell'=x_{\lambda_1}
 +x_{\lambda_2}+x_{d_1}-x_{\lambda_1}'-x_{\lambda_2}'-x'_{d_2}=0$;
 \vspace{2pt}\item For $i\in[\lambda_1+1,d_1]=\{3,4,5\}$,
 $u_i=\sum_{\ell=i}^nx_\ell-\sum_{\ell=i}^nx_\ell'=
 x_{d_1}\!+\!x_{\lambda_2}\!-\!x_{\lambda_2-1}'\!-\!x'_{d_2}
 \!=\!x_{\lambda_2}\!-\!x'_{\lambda_2-1}\!=\!1$;
 \vspace{2pt}\item For $i\in[d_1+1,\lambda_2-1]=\{6,7,8,9\}$,
 $u_i=\sum_{\ell=i}^nx_\ell-\sum_{\ell=i}^nx_\ell'=
 x_i+x_{\lambda_2}-x'_{\lambda_2-1}-x_{d_2}'=x_i+1\in\{1,2\}$;
 \vspace{2pt}\item For $i\in[\lambda_2,d_2]=\{10,11,12,13,14\}$,
 $u_i=\sum_{\ell=i}^nx_\ell-\sum_{\ell=i}^nx_\ell'
 =x_{i}-x'_{d_2}=x_{i}\in\{0,1\}$;
 \vspace{2pt}\item For $i\in[d_2+1,n]=\{15,16\}$,
 $u_i=\sum_{\ell=i}^nx_\ell-\sum_{\ell=i}^nx_\ell'=0$;
\end{itemize}
In summary, we have \begin{align*} &(u_1,u_2,\cdots\!,u_n)\\
&=(0,0,1,1,1,2,2,2,1,1,0,0,1,1,0,0).\end{align*} We can see that
$u_i\geq 0$ for all $i\in[1,\lambda_2-1]=\{1,2,\cdots,9\}$, and
$u_i\geq 0$ for all $i\in[\lambda_2,n]=\{10,11,\cdots,16\}$. Note
that in this example, we have $u_i\geq 0$ for all $i\in[n]$, which
is stronger than Claim 2. However, this is not the case in
general.
\end{exam}

\begin{figure}[htbp]\label{tp-eg-3-2}
\begin{center}
\includegraphics[height=1.05cm]{fig-exm.2}
\end{center}
\caption{An example of Case (ii).}
\end{figure}

\subsection{Proof of $\bm{x}=\bm{x}'$ for Case (iii)}

For this case, we have $e_1<d_1\leq d_2<e_{2}$ or $e_2<d_1\leq
d_{2}<e_1$. Let $\lambda_1=\min\{e_1,e_2\}$ and
$\lambda_2=\max\{e_1,e_2\}$. Then we have $\lambda_1<d_1\leq
d_2<\lambda_2$. Since $E(\bm{x},d_1,e_1)=E(\bm{x}',d_2,e_2)$,
analogous to \eqref{xxp-case1}, we can obtain
\begin{equation}\label{xxp-case3} x_i=\!\left\{\!\begin{aligned}
&x'_i, ~~~~\text{for}~i\in[1,d_1-1]\backslash \{\lambda_1\},\\
&x'_{i-1}, ~\text{for}~i\in[d_1+1,d_2],\\
&x'_i, ~~~~\text{for}~i\in[d_2+1,n]\backslash \{\lambda_2\}.
\end{aligned}\right.
\end{equation}
Moreover, we have $x_{\lambda_1}\neq x'_{\lambda_1}$ and
$x_{\lambda_2}\neq x'_{\lambda_2}$ because of the substitution
error. According to \eqref{xxp-case3}, this case can be
illustrated by Fig. 5.
\begin{figure}[htbp]\label{tp-intv-3-3}
\begin{center}
\includegraphics[height=1.1cm]{case.3}
\end{center}
\caption{Illustration of Case (iii).}
\end{figure}

By Remark \ref{rem-wt-xd}, we have $\mathsf{wt}(\bm
x)=\mathsf{wt}(\bm x')$ and $x_{d_1}=x'_{d_2}$. Then by
\eqref{xxp-case3} or Fig. 5, and through a cancelling process
similar to Case (i), we can obtain $0=\mathsf{wt}(\bm
x)-\mathsf{wt}(\bm
x')=\sum_{\ell=1}^nx_\ell-\sum_{\ell=1}^nx_\ell'
=x_{\lambda_1}+x_{\lambda_2}+x_{d_1}-x'_{\lambda_1}-
x'_{\lambda_2}-x'_{d_2}=
x_{\lambda_1}+x_{\lambda_2}-x'_{\lambda_1}- x'_{\lambda_2}$ and
\begin{equation}\label{xxp-case3-1} u_i\!=\!\left\{\!\begin{aligned}
&0, ~~~~~~~~\!~~~~\text{for}~i\!\in\![1,\lambda_1];\\
&x_{\lambda_2}-x'_{\lambda_2},
~\text{for}~i\in[\lambda_1+1,d_1];\\
&x_{i}+x_{\lambda_2}-x'_{d_2}-x'_{\lambda_2},
~\text{for}~i\in[d_1+1,d_2];\\
&x_{\lambda_2}-x'_{\lambda_2}, ~~~\text{for}~i\in[d_2+1,\lambda_2];\\
&0, ~~~~~~~~\!~~~~\text{for}~i\in[\lambda_2+1,n].
\end{aligned}\right.
\end{equation}
Then we have the following Claim.

\emph{Claim 3}: Either $u_i\geq 0$ for all $i\in[n]$ or $u_i\leq
0$ for all $i\in[n]$. Moreover, $|u_i|\leq 2$ for all $i\in[n]$.

\begin{proof}[Proof of Claim 3]
Note that $x_{\lambda_2}\neq x'_{\lambda_2}$. Then we have
$x_{\lambda_2}-x'_{\lambda_2}\in\{-1,1\}$. To prove Claim 3,
similar to Case (ii), we consider the following two subcases.

Case (iii.1): $x_{\lambda_2}-x'_{\lambda_2}=1$.

By \eqref{xxp-case3-1}, we can obtain
\begin{equation*} u_i\!=\!\left\{\!\begin{aligned}
&0, ~~~~~~~\text{for}~i\!\in\![1,\lambda_1];\\
&1, ~~~~~~~\text{for}~i\in[\lambda_1+1,d_1];\\
&x_{i}-x'_{d_2}+1,
~\text{for}~i\in[d_1+1,\lambda_2-1];\\
&1, ~~~~~~\text{for}~i\in[\lambda_2,d_2];\\
&0, ~~~~~~\text{for}~i\in[d_2+1,n].
\end{aligned}\right.
\end{equation*}
Note that $1-x'_{d_2}\geq 0$ and $x_i\geq 0$ for all $i\in[n]$.
Then $u_i\geq 0$ for all $i\in[n]$.

Case (iii.2): $x_{\lambda_2}-x'_{\lambda_2}=-1$.

By \eqref{xxp-case3-1}, we can obtain
\begin{equation*} u_i\!=\!\left\{\!\begin{aligned}
&0, ~~~~~~~\text{for}~i\!\in\![1,\lambda_1];\\
&-1, ~~~~~~~~~\text{for}~i\in[\lambda_1+1,d_1];\\
&x_{i}-x'_{d_2}-1,
~\text{for}~i\in[d_1+1,\lambda_2-1];\\
&-1, ~~~~~~\text{for}~i\in[\lambda_2,d_2];\\
&0, ~~~~~~~\text{for}~i\in[d_2+1,n].
\end{aligned}\right.
\end{equation*}
Since $-x'_{d_2}\leq 0$ and $x_{i}-1\leq 0$ for all $i\in[n]$,
then $u_i\leq 0$ for all $i\in[n]$.

Thus, either $u_i\geq 0$ for all $i\in[n]$ or $u_i\leq 0$ for all
$i\in[n]$.

Note that $|x_{\lambda_2}-x'_{\lambda_2}|\leq 1$ and
$|x_{i}-x'_{d_2}|\leq 1$. It is easy to see from
\eqref{xxp-case3-1} that $|u_i|\leq 2$ for all $i\in[n]$, which
proves Claim 3.
\end{proof}

\vspace{2pt} Similar to Case (i), by \eqref{re-def-f} and by Claim
3, for $j=1,2$,
\begin{align*}|f_j(\bm x)-f_j(\bm
x')|\leq\sum_{i=1}^n|u_i|i^{j-1}\leq\sum_{i=1}^n2i^{j-1}<2n^j.\end{align*}
On the other hand, by (C1) and (C2) of Definition \ref{code-Con},
we have $f_j(\bm x)\equiv f_j(\bm x')~(\text{mod}~2n^j)$, so
$f_j(\bm x)=f_j(\bm x')$. Then by Claim 3 and Lemma
\ref{PT-sgn-eq}, we have $\bm x=\bm x'$.

\begin{exam}
Consider an example with
\begin{align*}
\bm x\!~&=1001010101001111,\\
\bm y\!~&=100110101001101,\\
  \bm x'&=1101101010001101,\end{align*} where $n=16$. We can check
that $\bm y$ can be obtained from $\bm x$ by deleting $x_{5}=0$
and substituting $x_{15}=1$ with $y_{14}=\bar{x}_{15}=0$, and $\bm
y$ can also be obtained from $\bm x'$ by deleting $x'_{10}=0$ and
substituting $x'_{2}=1$ with $y_2=\bar{x}'_2=0$. Hence, $\bm
y=E(\bm x,5,15)=E(\bm x',10,2)$, that is, $d_1=5$, $e_1=15$,
$d_2=10$ and $e_2=2$. Since $e_2<e_1$, we take $\lambda_1=e_2=2$
and $\lambda_2=e_1=15$. For this example, $\bm x$ and $\bm x'$ can
be illustrated by Fig. 6, which is an instance of Fig. 5. It is
easy to check that:
\begin{itemize}
 \item For $i\in[\lambda_1]=\{1,2\}$,
 $u_i=\sum_{\ell=i}^nx_\ell-\sum_{\ell=i}^nx_\ell'=x_{\lambda_1}
 +x_{\lambda_2}+x_{d_1}-x_{\lambda_1}'-x_{\lambda_2}'-x'_{d_2}=0$;
 \vspace{2pt}\item For $i\in[\lambda_1+1,d_1]=\{3,4,5\}$,
 $u_i=\sum_{\ell=i}^nx_\ell-\sum_{\ell=i}^nx_\ell'=
 x_{d_1}\!+\!x_{\lambda_2}\!-\!x'_{d_2}-\!x_{\lambda_2}'\!
 \!=\!x_{\lambda_2}\!-\!x'_{\lambda_2}\!=\!1$;
 \vspace{2pt}\item For $i\in[d_1+1,d_2]=\{6,7,8,9,10\}$,
 $u_i=\sum_{\ell=i}^nx_\ell-\sum_{\ell=i}^nx_\ell'=
 x_i+x_{\lambda_2}-x_{d_2}'-x'_{\lambda_2}=x_i+1\in\{1,2\}$;
 \vspace{2pt}\item For $i\in[d_2+1,\lambda_2]=\{11,12,13,14,15\}$,
 $u_i=\sum_{\ell=i}^nx_\ell-\sum_{\ell=i}^nx_\ell'
 =x_{\lambda_2}-x'_{\lambda_2}=1$;
 \vspace{2pt}\item For $i\in[\lambda_2+1,n]=\{16\}$,
 $u_i=\sum_{\ell=i}^nx_\ell-\sum_{\ell=i}^nx_\ell'=0$;
\end{itemize}
In summary, we have \begin{align*} &(u_1,u_2,\cdots\!,u_n)\\
&=(0,0,1,1,1,2,1,2,1,2,1,1,1,1,1,0).\end{align*} We can see that
$u_i\geq 0$ for all $i\in[n]=\{1,2,\cdots,16\}$.
\end{exam}

\begin{figure}[htbp]\label{tp-eg-3-3}
\begin{center}
\includegraphics[height=1.05cm]{fig-exm.3}
\end{center}
\caption{An example of Case (iii).}
\end{figure}

\subsection{Proof of $\bm{x}=\bm{x}'$ for Case (v)}

For this case, we have $d_1<e_1\leq d_2<e_{2}$ or $d_1\leq
e_2<d_{2}<e_1$. If $d_1<e_1\leq d_2<e_{2}$, let $\lambda_1=e_1$
and $\lambda_2=e_2$; If $d_1\leq e_2<d_{2}<e_1$, let
$\lambda_1=e_2+1$ and $\lambda_2=e_1$. Then for both cases, we
always have $d_1<\lambda_1\leq d_2<\lambda_2$. Since
$E(\bm{x},d_1,e_1)=E(\bm{x}',d_2,e_2)$, analogous to
\eqref{xxp-case1}, we can obtain
\begin{equation}\label{xxp-case5} x_i=\!\left\{\!\begin{aligned}
&x'_i, ~~~~\text{for}~i\in[1,d_1-1],\\
&x'_{i-1}, ~\text{for}~i\in[d_1+1,d_2]\backslash \{\lambda_1\},\\
&x'_i, ~~~~\text{for}~i\in[d_2+1,n]\backslash \{\lambda_2\}.
\end{aligned}\right.
\end{equation}
Moreover, we have $x_{\lambda_1}\neq x'_{\lambda_1-1}$ and
$x_{\lambda_2}\neq x'_{\lambda_2}$ because of the substitution
error. According to \eqref{xxp-case5}, this case can be
illustrated by Fig. 7.
\begin{figure}[htbp]\label{tp-intv-3-5}
\begin{center}
\includegraphics[height=1.1cm]{case.5}
\end{center}
\caption{Illustration of Case (v).}
\end{figure}

By Remark \ref{rem-wt-xd}, we have $\mathsf{wt}(\bm
x)=\mathsf{wt}(\bm x')$ and $x_{d_1}=x'_{d_2}$. Then by
\eqref{xxp-case5} or Fig. 7, and through a cancelling process
similar to Case (i), we can obtain $0=\mathsf{wt}(\bm
x)-\mathsf{wt}(\bm
x')=\sum_{\ell=1}^nx_\ell-\sum_{\ell=1}^nx_\ell'
=x_{\lambda_1}+x_{\lambda_2}+x_{d_1}-x'_{\lambda_1-1}-
x'_{\lambda_2}-x'_{d_2}=
x_{\lambda_1}+x_{\lambda_2}-x'_{\lambda_1-1}- x'_{\lambda_2}$ and
\begin{equation}\label{xxp-case5-1} u_i\!=\!\left\{\!\begin{aligned}
&0, ~~~~~~~~\!~~~~\text{for}~i\!\in\![1,d_1];\\
&x_{i}-x'_{d_2}, ~\text{for}~i\in[d_1+1,\lambda_1-1];\\
&x_{i}+x_{\lambda_2}-x'_{d_2}-x'_{\lambda_2},
~\text{for}~i\in[\lambda_1,d_2];\\
&x_{\lambda_2}-x'_{\lambda_2}, ~~~\text{for}~i\in[d_2+1,\lambda_2];\\
&0, ~~~~~~~~\!~~~~\text{for}~i\in[\lambda_2+1,n].
\end{aligned}\right.
\end{equation}
Then we have the following Claim.

\emph{Claim 4}: Let $p_1=\lambda_1-1$. Then for each
$j\in\{1,2\}$, either $u_i\geq 0$ for all $i\in[p_{j-1}+1,p_j]$ or
$u_i\leq 0$ for all $i\in[p_{j-1}+1,p_j]$, where $p_0=1$ and
$p_2=n$. Moreover, we have $|u_i|\leq 2$ for all $i\in[n]$.

\begin{proof}[Proof of Claim 4]
For $i\in[1,\lambda_1-1]$, by \eqref{xxp-case5-1}, $u_i=0$ or
$u_i=x_{i}-x'_{d_2}$. Clearly, if $x'_{d_2}=0$, then $u_i\geq 0$
for all $i\in[1,\lambda_1-1]$; if $x'_{d_2}=1$, then $u_i\leq 0$
for all $i\in[1,\lambda_1-1]$.

For $i\in[\lambda_1, n]$, similar to Case (ii), we need to
consider the following two subcases.

Case (v.1): $x_{\lambda_2}-x'_{\lambda_2}=1$.

By \eqref{xxp-case5-1}, we have
\begin{equation*} u_i\!=\!\left\{\!\begin{aligned}
&x_{i}-x'_{d_2}+1,
~\text{for}~i\in[\lambda_1,d_2];\\
&1, ~~~~\text{for}~i\in[d_2+1,\lambda_2];\\
&0, ~~~~\text{for}~i\in[\lambda_2+1,n].
\end{aligned}\right.
\end{equation*}
Note that $1-x'_{d_2}\geq 0$ and $x_{i}\geq 0$ for all $i\in[n]$.
Then $u_i\geq 0$ for all $i\in[\lambda_1, n]$.

Case (v.2): $x_{\lambda_2}-x'_{\lambda_2}=-1$.

By \eqref{xxp-case5-1}, we have
\begin{equation*} u_i\!=\!\left\{\!\begin{aligned}
&x_{i}-x'_{d_2}-1,
~\text{for}~i\in[\lambda_1,d_2];\\
&-1, ~~~~\text{for}~i\in[d_2+1,\lambda_2];\\
&0, ~~~~\text{for}~i\in[\lambda_2+1,n].
\end{aligned}\right.
\end{equation*}
Since $-x'_{d_2}\leq 0$ and $x_{i}-1\leq 0$ for all $i\in[n]$,
then $u_i\leq 0$ for all $i\in[\lambda_1, n]$.

Thus, $p_1=\lambda_1-1$ satisfies the desired property.

Note that $|x_{\lambda_2}-x'_{\lambda_2}|\leq 1$ and
$|x_{i}-x'_{d_2}|\leq 1$. It is easy to see from
\eqref{xxp-case5-1} that $|u_i|\leq 2$ for all $i\in[n]$, which
proves Claim 4.
\end{proof}

\vspace{2pt} Similar to Case (i), by \eqref{re-def-f} and by Claim
4, for $j=1,2$,
\begin{align*}|f_j(\bm x)-f_j(\bm
x')|\leq\sum_{i=1}^n|u_i|i^{j-1}\leq\sum_{i=1}^n2i^{j-1}<2n^j.\end{align*}
On the other hand, by (C1) and (C2) of Definition \ref{code-Con},
we have $f_j(\bm x)\equiv f_j(\bm x')~(\text{mod}~2n^j)$, so
$f_j(\bm x)=f_j(\bm x')$. Then by Claim 4 and Lemma
\ref{PT-sgn-eq}, we have $\bm x=\bm x'$.

\subsection{Proof of $\bm{x}=\bm{x}'$ for Case (vi)}

For this case, we have $d_1\leq d_2<e_1$ and $d_1\leq d_{2}<e_2$.
Similar to Case (i), if $e_1=e_2$, then $\bm x=\bm x'$. Therefore,
we assume $e_1\neq e_2$. Let $\lambda_1=\min\{e_1,e_2\}$ and
$\lambda_2=\max\{e_1,e_2\}$. Then we have $d_1\leq
d_2<\lambda_1<\lambda_2$. Since
$E(\bm{x},d_1,e_1)=E(\bm{x}',d_2,e_2)$, analogous to
\eqref{xxp-case1}, we can obtain
\begin{equation}\label{xxp-case6} x_i=\!\left\{\!\begin{aligned}
&x'_i, ~~~~\text{for}~i\in[1,d_1-1],\\
&x'_{i-1}, ~\text{for}~i\in[d_1+1,d_2],\\
&x'_i, ~~~~\text{for}~i\in[d_2+1,n]\backslash \{\lambda_1,
\lambda_2\}.
\end{aligned}\right.
\end{equation}
Moreover, we have $x_{\lambda_1}\neq x'_{\lambda_1}$ and
$x_{\lambda_2}\neq x'_{\lambda_2}$ because of the substitution
error. According to \eqref{xxp-case6}, this case can be
illustrated by Fig. 8.
\begin{figure}[htbp]\label{tp-intv-3-6}
\begin{center}
\includegraphics[height=1.1cm]{case.6}
\end{center}
\caption{Illustration of Case (vi).}
\end{figure}

By Remark \ref{rem-wt-xd}, we have $\mathsf{wt}(\bm
x)=\mathsf{wt}(\bm x')$ and $x_{d_1}=x'_{d_2}$. Then by
\eqref{xxp-case6} or Fig. 8, and through a cancelling process
similar to Case (i), we can obtain $0=\mathsf{wt}(\bm
x)-\mathsf{wt}(\bm
x')=\sum_{\ell=1}^nx_\ell-\sum_{\ell=1}^nx_\ell'
=x_{\lambda_1}+x_{\lambda_2}+x_{d_1}-x'_{\lambda_1}-
x'_{\lambda_2}-x'_{d_2}=
x_{\lambda_1}+x_{\lambda_2}-x'_{\lambda_1}- x'_{\lambda_2}$ and
\begin{equation}\label{xxp-case6-1} u_i\!=\!\left\{\!\begin{aligned}
&0, ~~~~~~~~\!~~~~\text{for}~i\!\in\![1,d_1];\\
&x_{i}-x'_{d_2}, ~\text{for}~i\in[d_1+1,d_2];\\
&0, ~~~~~~~\text{for}~i\in[d_2+1,\lambda_1];\\
&x_{\lambda_2}-x'_{\lambda_2},
~~~\text{for}~i\in[\lambda_1+1,\lambda_2];\\
&0, ~~~~~~~~\!~~~~\text{for}~i\in[\lambda_2+1,n].
\end{aligned}\right.
\end{equation}
Then we have the following Claim.

\emph{Claim 5}: Let $p_1=d_2$. Then for each $j\in\{1,2\}$, either
$u_i\geq 0$ for all $i\in[p_{j-1}+1,p_j]$ or $u_i\leq 0$ for all
$i\in[p_{j-1}+1,p_j]$, where $p_0=1$ and $p_2=n$. Moreover,
$|u_i|\leq 1$ for all $i\in[n]$.

\begin{proof}[Proof of Claim 5]
For $i\in[1, d_2]$, by \eqref{xxp-case6-1}, we have $u_i=0$ or
$u_i=x_{i}-x'_{d_2}$. If $x'_{d_2}=0$, then $u_i\geq 0$ for all
$i\in[1, d_2]$; if $x'_{d_2}=1$, then $u_i\leq 0$ for all $i\in[1,
d_2]$.

For $i\in[d_2+1, n]$, by \eqref{xxp-case6-1}, we have $u_i=0$ or
$u_i=x_{\lambda_2}-x'_{\lambda_2}$. If $x'_{\lambda_2}=0$, then
$u_i\geq 0$ for all $i\in[d_2+1, n]$; if $x'_{\lambda_2}=1$, then
$u_i\leq 0$ for all $i\in[d_2+1, n]$.

Thus, $p_1=d_2$ satisfies the desired property.

Note that $|x_{\lambda_2}-x'_{\lambda_2}|\leq 1$ and
$|x_{i}-x'_{d_2}|\leq 1$. It is easy to see from
\eqref{xxp-case6-1} that $|u_i|\leq 1$ for all $i\in[n]$, which
proves Claim 5.
\end{proof}

\vspace{2pt} Similar to Case (i), by \eqref{re-def-f} and by Claim
5, for $j=1,2$,
\begin{align*}|f_j(\bm x)-f_j(\bm
x')|\leq\sum_{i=1}^n|u_i|i^{j-1}\leq\sum_{i=1}^ni^{j-1}<n^j.\end{align*}
On the other hand, by (C1) and (C2) of Definition \ref{code-Con},
we have $f_j(\bm x)\equiv f_j(\bm x')~(\text{mod}~2n^j)$, so
$f_j(\bm x)=f_j(\bm x')$. Then by Claim 5 and Lemma
\ref{PT-sgn-eq}, we have $\bm x=\bm x'$.



\end{document}